\newcommand{\p}{\partial}
\newcommand{\dd}{{\rm d}}
\newcommand{\bd}{\begin{definition}}                %inizia definizione
\newcommand{\ed}{\end{definition}}                  %fine definizione
\newcommand{\bc}{\begin{corollary}}                 %inizia corollario
\newcommand{\ec}{\end{corollary}}                   %fine corollario
\newcommand{\bl}{\begin{lemma}}                     %inizia lemma
\newcommand{\el}{\end{lemma}}                       %fine lemma
\newcommand{\bp}{\begin{proposition}}            %inizia proposizione
\newcommand{\ep}{\end{proposition}}                %fine proposizione
\newcommand{\bere}{\begin{remark}}                  %inizia osservazione
\newcommand{\ere}{\end{remark}}                     %fine oservazione
\newcommand{\bt}{\begin{theorem}}
\newcommand{\et}{\end{theorem}}
\newcommand{\be}{\begin{equation}}
\newcommand{\ee}{\end{equation}}
\newcommand{\bit}{\begin{itemize}}
\newcommand{\eit}{\end{itemize}}
\newtheorem{theorem}{Theorem}[section]
\newtheorem{corollary}[theorem]{Corollary}
\newtheorem{lemma}[theorem]{Lemma}
\newtheorem{proposition}[theorem]{Proposition}
\theoremstyle{definition}
\newtheorem{definition}[theorem]{Definition}
\theoremstyle{remark}
\newtheorem{remark}[theorem]{Remark}
\begin{document}

\title{Completeness of Cauchy horizon generators}

%\title{On the local Lorentzian length maximization of causal geodesics}

\author{E. Minguzzi\thanks{
Dipartimento di Matematica e Informatica ``U. Dini'', Universit\`a
degli Studi di Firenze, Via S. Marta 3,  I-50139 Firenze, Italy.
E-mail: ettore.minguzzi@unifi.it} }

\date{}

\maketitle

\begin{abstract}
\noindent It is proved that every compactly generated future Cauchy horizon has past  complete generators, and dually. No condition on the differentiability of the horizon is imposed.
\end{abstract}

\section{Introduction}

The aim of this work is to prove that every compactly generated past Cauchy horizon $H^-(S)$ of a partial Cauchy hypersurface $S$ must have future geodesically complete generators, all that without imposing differentiability assumptions on the horizon which, being achronal, must already be locally Lipschitz \cite{lerner72}. It must be emphasized that placing strong differentiability conditions on the metric or on the partial Cauchy hypersurface $S$ does not guarantee that the horizon will be $C^1$ \cite{budzynski03}. Therefore, it is  necessary to study these objects without relying on too strong differentiability assumptions.

%For applications, the dual statement involving $H^+(S)$ would be more useful, but we prefer to give the proof in this case since since in this way the generators are inextendible towards the future, a fact which simplifies the notation.
%We shall give the proof in the time reversed case

In order to formalize our problem let us recall some definitions.
A {\em spacetime} $(M,g)$ is a paracompact, time oriented Lorentzian manifold of dimension $n+1\ge 2$. We  assume that  $M$ is at least $C^{3}$ and that $g$ is $C^{2}$.
The {\em chronology violating set} is defined by $\mathcal{C}=\{p: p\ll p\}$, namely it is the (open) subset of $M$ made by those events through which passes a closed timelike curve.
A {\em lightlike line} is an achronal inextendible causal curve, hence a lightlike geodesic without conjugate points.  A  future inextendible causal curve $\gamma: [a,b)\to M$ is {\em totally future imprisoned} (or simply future imprisoned) in a compact set $K$, if there is $t_0\in [a,b)$ such that for $t>t_0$, $\gamma(t)\in K$.
A {\em partial Cauchy hypersurface} is an acausal edgeless (and hence closed) set.
Observe that for a partial Cauchy hypersurface $\textrm{edge} (H^{-}(S))=\textrm{edge} (S)=\emptyset$ (cf.\ \cite[Prop.\ 6.5.2]{hawking73}). Since every generator terminates at the edge of the horizon, the generators of $H^-(S)$  are future inextendible lightlike geodesic.
%
%We recall how to generalize the notion of null hypersurface to the non-differentiable case \cite{galloway00}.
%A $C^0$ {\em future null hypersurface} $H$ is a  locally achronal topological embedded hypersurface,  such that for every $p\in H$ and for every neighborhood $U\ni p$ in which $H$ is achronal, there exists a point $q\in J^+(p,U)\cap H$, $q\ne p$.

%Through every point of a   $C^0$ future null hypersurface passes a future inextendible null geodesic, that is, $H$ is future null geodesically ruled \cite{galloway00}.
% in particular $H^-(S)$ is a $C^0$ future null hypersurface.

A past Cauchy horizon $H^{-}(S)$ is {\em
compactly generated} if there is a compact set $K$ such that every generator $\gamma$ of $H^{-}(S)$ is future
imprisoned in $K$.
Actually, we shall  use only the fact that each future generator $\gamma$ enters some compact set $K_\gamma$ that might depend on the generator. The notion of compactly generated Cauchy horizon has been introduced in Hawking's paper on chronology protection \cite{hawking92} although similar conditions appeared before \cite{tipler76}.

Our aim  is to prove the following result.

\begin{theorem} \label{one}
Let  $H^{-}(S)$ be a compactly generated past Cauchy horizon  for some  partial Cauchy hypersurface $S$. Then every lightlike generator $\gamma$ of $H^{-}(S)$ is
future complete (and analogously in the time-orientation-reversed case).
\end{theorem}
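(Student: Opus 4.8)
The plan is to argue by contradiction. Suppose some future generator $\gamma\colon[0,b)\to M$, affinely parametrized, is future incomplete, so $b<\infty$, and by compact generation fix a compact $K$ with $\gamma([t_0,b))\subset K$; relabelling, assume $\gamma([0,b))\subset K$. Fix an auxiliary complete Riemannian metric $h$ on $M$. The first step is a dichotomy governed by the $h$-length $\int_0^b|\dot\gamma|_h\,\dd\lambda$. If this length were finite then $\gamma(\lambda)$ would be $h$-Cauchy as $\lambda\to b^-$ and converge to some $q\in K$; since $g$ is $C^2$ the Christoffel symbols are bounded on $K$, so the $g$-geodesic issuing from $q$ in the limiting (projective) tangent direction continues $\gamma$ past $q$, contradicting that $\gamma$, being a horizon generator, is future inextendible. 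Hence $\int_0^b|\dot\gamma|_h\,\dd\lambda=\infty$: although the affine length is finite, the $h$-arclength is infinite and $|\dot\gamma|_h\to\infty$, i.e.\ the tangent undergoes an unbounded net blueshift as $\lambda\to b^-$.

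Reparametrizing by $h$-arclength $s\in[0,\infty)$, the curve stays in the compact unit bundle $U^hM|_K$, so applying the limit curve theorem to the directions $\dot{\tilde\gamma}(s_n)$, $s_n\to\infty$, one extracts an inextendible null pregeodesic $\eta\subset K$. Because $H^-(S)$ is closed and achronal and $\gamma\subset H^-(S)$, the accumulation of $\gamma$ lies in $H^-(S)$, so $\eta$ is an achronal null line totally imprisoned in $K$, and $\gamma$ spirals into $\eta$ while its affine tangent blows up. It must be stressed that imprisonment and compactness \emph{alone} do not force completeness --- Misner (Taub--NUT) spacetime carries an incomplete imprisoned null geodesic --- so the decisive ingredient must be the achronal, \emph{past}-horizon nature of $H^-(S)$, which is what will fix the sign of the net blueshift.

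The crux, and the step I expect to be the main obstacle, is to convert the future blueshift into a violation of achronality. Writing $f=|\dot\gamma|_h$, the $C^2$ bound on $K$ gives $|\dd(\ln f)/\dd s|\le C_0$ along $\gamma$, so the blueshift rate is a bounded continuous datum and future incompleteness is equivalent to a strictly positive \emph{net} blueshift accumulated as $\gamma$ winds around the almost-closed $\eta$. The plan is then to exploit recurrence: choose $x=\gamma(\lambda_1)$ and a near-return point $q=\gamma(\lambda_2)$, $\lambda_2>\lambda_1$, with $q$ arbitrarily $h$-close to $x$, and show by a $C^0$ corner-cutting argument --- a short causal deformation of the quasi-loop, made possible precisely by the contracting orientation of the return that a positive blueshift on a \emph{past} Cauchy horizon supplies --- that $q\in I^+(x)$. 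Since $x,q\in H^-(S)$, this contradicts achronality of the horizon. The delicate points are (i) performing this deformation at the merely Lipschitz regularity of $H^-(S)$, which forces one to work with limit curves and the open relation $I^+$ rather than with any differentiable normal structure, and (ii) checking that the orientation supplied by $D^-(S)$ lying to the past does make the near-return fall inside $I^+(x)$ rather than onto or outside the null cone --- in the complete, redshifting, time-reversed case it falls the other way, consistently with Misner spacetime.

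Having reached a contradiction, every future generator of $H^-(S)$ must be future complete, and the time-orientation-reversed statement for $H^+(S)$ follows by applying the result to $(M,g)$ with reversed time orientation.
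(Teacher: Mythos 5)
Your reduction to the blueshift picture (finite $h$-length would give a future endpoint and contradict inextendibility, hence the $h$-norm of the tangent blows up while $\gamma$ accumulates on a null line inside the horizon) is sound and in fact matches the setup of the paper's proof, where Lemma \ref{pdg} gives $h^{-1}\to 0$ along the incomplete generator. But the step you yourself flag as the crux --- finding a near-return $q=\gamma(\lambda_2)$ close to $x=\gamma(\lambda_1)$ and showing $q\in I^+(x)$, contradicting achronality of $H^{-}(S)$ --- is not merely unproved; as aimed, it is aimed at a false target, for two reasons. First, a generator need not accumulate on itself: $\Omega_f(\gamma)$ can be disjoint from $\gamma$, so your near-return points need not exist at all. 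The paper is explicit about this: its proof of Theorem \ref{one} notes ``we are not demanding that $\gamma\subset \Omega_f(\gamma)$'', and its Theorem \ref{kso} assumes $\gamma\subset\Omega_f(\gamma)$ as an \emph{extra} hypothesis precisely because the near-return mechanism needs it. Second, and more decisively, even with exact returns blueshift is compatible with achronality: the incomplete closed null geodesic on the Misner boundary is achronal, imprisoned, and blueshifted on every circuit, yet no point of it chronologically precedes another; the closed timelike curves one obtains by corner-cutting lie strictly in $I^+$ of the curve (this is exactly the content of Theorem \ref{kso}: CTCs in $I^+_U(\gamma)$, not through points of $\gamma$), hence never on the achronal set itself. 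So no $C^0$ corner-cutting can land $q$ in $I^+(x)$, and your item (ii) --- that the orientation supplied by ``$D^-(S)$ lying to the past'' fixes the sign --- cannot be salvaged as stated: Misner with $S$ the Misner boundary \emph{is} a past-horizon configuration, and what fails there is only the acausality of $S$, not any orientation of the blueshift.

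The paper uses acausality in a different place and derives the contradiction from the domain-of-dependence property rather than from achronality: acausality gives $H^{-}(S)\cap S=\emptyset$, hence a compact $\bar U\supset \gamma\cup\Omega_f(\gamma)$ with $\bar U\cap S=\emptyset$; then the entire analytic content of the paper (Theorem \ref{oih}, proved via Lemma \ref{pdg}, Gronwall estimates and the Chaplygin-type Theorem \ref{aph}) is the construction of a future-directed timelike variation of $\gamma$ with amplitude $x(t)=h^{-1}(t)/(2v_\infty-v(t))\to 0$, shown to be timelike \emph{uniformly in $t$} for a fixed small variational parameter. This yields a future inextendible timelike curve $\sigma$ imprisoned in $\bar U$ whose starting point lies in ${\rm Int}\,D^{-}(S)$; since every future inextendible timelike curve from ${\rm Int}\,D^{-}(S)$ must meet $S$, while $\sigma$ stays in $\bar U$ disjoint from $S$, one gets the contradiction. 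Your proposal contains no substitute for this uniform variation --- the ``short causal deformation at merely Lipschitz regularity'' is exactly the missing construction, and it is the hard part the paper devotes Theorem \ref{oih} to --- and redirecting the contradiction at achronality of the horizon makes it unreachable even in principle.
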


\begin{remark}
The acausality of $S$ in this  theorem is essential,
otherwise one gets a counterexample (to the past version) with $S$
the Misner boundary in the Misner spacetime \cite{stahl97}.
\end{remark}

This lemma appeared  in Hawking and Ellis' book  \cite[Lemma 8.5.5]{hawking73} and  was used several times\cite{galloway83,hawking92,beem98,krolak93,krasnikov94,budzynski01,rudnicki02} although no other proof can be found in  the literature.
As pointed out by Tipler \cite{tipler76,tipler77}  there are some (amendable) algebraic errors in Hawking and Ellis' proof.

There is also a  flaw at the conclusion of the proof where it is assumed without justification that the strong causality condition holds at the horizon (actually examples show that strongly causality does not need to hold at the horizon).  Budzy{\'n}ski, Kondracki and  Kr{\'o}lak \cite[Lemma 15]{budzynski01} recognized this problem and obtained the contradiction required by the proof changing just the last steps of the argument.

The main difficulty seems to be related with some tacit assumptions on the differentiability of the horizon. These difficulties have not been mentioned by previous authors probably because they had to assume the differentiability of the horizon anyway \cite{moncrief83,budzynski01,friedrich99}, for instance in order to apply   Hawking's area theorem \cite{hawking73}. Since  now we have at our disposal a  version of the area theorem for non-differentiable horizons \cite{chrusciel01}, it has become important to complement that result with Theorem \ref{one}.

Hawking and Ellis' proof of the completeness of the generators  requires the existence of a normalized $C^2$ geodesic vector field $V$ transverse to $H^{-}(S)$. If the horizon is $C^2$ this vector field can be constructed through the exponential map based on $H^{-}(S)$, however if $H^{-}(S)$ is not assumed $C^2$, and hence it is just Lipschitz, it is not at all clear that such geodesic  transverse field exists, particularly because the proof demands its existence in a whole neighborhood of $H^{-}(S)$, which seems a strong constraint especially for compact horizons.

One could try to approximate $H^{-}(S)$ with smooth hypersurfaces and then to use the exponential map from those, but it is unclear whether such an approach could be really successful since the Lipschitzness of $H^{-}(S)$ would imply an uncontrolled focusing of the geodesics normal to such approximating hypersurfaces. Even if the local problem could be solved one would face the issue of globalization. Observe that the sum of two geodesic timelike vector field is timelike but not necessarily geodesic.
Typically transversality conditions are quite demanding when it comes to smoothness issues. For instance,   the transversality of the boundary of a domain  with a continuous vector field reduces the family of Lipschitz domains to a subfamily (i.e.\ that of strongly Lipschitz domains \cite{hofmann07}) thus similar conditions cannot be assumed without justification.

This author does not know whether Hawking and Ellis' geodesic transversality assumption can be ultimately justified.
Nevertheless, since the result on the completeness of generators is crucial for some other results in mathematical relativity, we believe that a proof free from this problem should be given. We now proceed to give this proof. It removes altogether the geodesic assumption on $V$ at the price of introducing some analytical complications.

\subsection{Mathematical preliminaries:  Chaplygin's theorem}
We shall need the following known lemma. We include the proof for
completeness.

\begin{lemma} \label{pdg}
Let $F: [t_0,+\infty) \to \mathbb{R}$ be a $C^2$ function such that
$F(t) \to 0$ as $t \to +\infty$, and there is a constant $B>0$ such
that $\vert F''\vert \le B \vert F'\vert$ for $t>t_0$, then $F',F''
\to 0$ in the same limit.
\end{lemma}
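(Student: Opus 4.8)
The plan is to argue by contradiction, reducing everything to a single one-sided differential inequality for the square of the derivative. I set $u = (F')^2$, which is $C^1$ and nonnegative. Differentiating and invoking the hypothesis $|F''| \le B|F'|$ gives $|u'| = 2|F'|\,|F''| \le 2B\,(F')^2 = 2B u$ for $t > t_0$, and in particular $u' \ge -2Bu$. This single inequality is the engine of the whole argument.

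The key observation is that $u' \ge -2Bu$ is equivalent to $\frac{\dd}{\dd t}\bigl(u\, \eu^{2Bt}\bigr) = (u' + 2Bu)\,\eu^{2Bt} \ge 0$, so the function $t \mapsto u(t)\,\eu^{2Bt}$ is nondecreasing. Consequently, for every $t \ge s$ one obtains the lower bound $u(t) \ge u(s)\,\eu^{-2B(t-s)}$. The interpretation is that once $|F'|$ is large at some instant $s$, it cannot decay too quickly: it remains comparably large over the following bounded interval.

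Now I suppose, for contradiction, that $F'$ does not tend to $0$. Then there are $\epsilon > 0$ and a sequence $t_n \to +\infty$ with $u(t_n) = (F'(t_n))^2 \ge \epsilon^2$. Applying the lower bound on the unit interval $[t_n, t_n+1]$ yields $u(t) \ge \epsilon^2 \eu^{-2B}$, hence $|F'(t)| \ge \epsilon\,\eu^{-B} > 0$ throughout $[t_n, t_n+1]$. In particular $F'$ never vanishes there, so by continuity it keeps a constant sign, whence $|F(t_n+1) - F(t_n)| = \int_{t_n}^{t_n+1} |F'|\,\dd t \ge \epsilon\,\eu^{-B}$. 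But $F(t) \to 0$ forces $F(t_n+1) - F(t_n) \to 0$, a contradiction. Thus $F' \to 0$, and then $|F''| \le B|F'|$ immediately gives $F'' \to 0$.

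The point that requires the most care is the possible vanishing or sign change of $F'$: one cannot simply pass to $\ln|F'|$ and use that $|(\ln|F'|)'| \le B$, because $F'$ may be zero or change sign. Working with $u = (F')^2$ sidesteps this entirely, since both the estimate $u' \ge -2Bu$ and the monotonicity of $u\,\eu^{2Bt}$ hold globally, irrespective of where $F'$ vanishes. The only remaining thing to check is that $F'$ has a fixed sign on each interval $[t_n, t_n+1]$, and this is automatic once the lower bound shows it is bounded away from zero there, which is exactly what makes the change in $F$ over that interval bounded below.
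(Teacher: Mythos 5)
Your proof is correct, and it takes a genuinely different route from the paper. The paper's argument is a Landau--Kolmogorov interpolation: Taylor expanding $F(t+2h)$ with Lagrange remainder gives $\sup\vert F'\vert \le \sup\vert F\vert/h + h\sup\vert F''\vert$ on each tail, optimizing over $h$ yields $\sup\vert F'\vert\le 2\sqrt{\sup\vert F\vert\,\sup\vert F''\vert}$, and feeding in $\vert F''\vert\le B\vert F'\vert$ closes the loop to the quantitative tail estimate $\sup\vert F'\vert\le 4B\sup\vert F\vert$, from which the limit is read off directly. You instead run a Gronwall-type persistence argument on $u=(F')^2$: the one-sided inequality $u'\ge -2Bu$ makes $u\,\eu^{2Bt}$ nondecreasing, so $\vert F'\vert$ large at one instant forces $\vert F'\vert\ge \epsilon\,\eu^{-B}$ with a fixed sign on a whole unit interval, contradicting $F\to 0$ via the fundamental theorem of calculus. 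Each approach buys something: the paper's gives an explicit decay rate for $F'$ in terms of the tail supremum of $F$ (not needed for the lemma, but stronger), while yours is softer and arguably more robust --- your handling of the zeros of $F'$ through $u=(F')^2$ is clean, and your argument never divides by or optimizes over suprema, so it silently avoids the finiteness-of-$\sup\vert F'\vert$ issue that the paper's optimization step takes for granted (finiteness does follow from the hypotheses, essentially by your own persistence estimate, but the paper does not pause to check it). Both proofs are complete; yours sacrifices only the quantitative bound, which the statement does not require.
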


\begin{proof}
 For every $\epsilon>0$ there is $t(\epsilon)>t_0$ such
that for $t>t(\epsilon)$, $\vert F\vert <\epsilon/4B$. Let $h>0$, by
Taylor expanding $F(t+2h)$ at $t>t(\epsilon)$ with a Lagrange
remainder
\[
F'(t)=\frac{1}{2h}[F(t+2h)-F(t)]-hF''(\zeta)  \qquad \zeta \in
[t,t+2h],
\]
we have
\[
\vert F'(t)\vert\le \frac{\sup_{t'>t(\epsilon)}\vert
F(t')\vert}{h}+h \sup_{t'>t(\epsilon)}\vert F''(t')\vert, \qquad
\forall h>0
\]
thus
\[
\sup_{t>t(\epsilon)} \vert F'(t)\vert\le \frac{\sup_{t>t(\epsilon)}
\vert F(t)\vert}{h}+h \sup_{t>t(\epsilon)}\vert F''(t)\vert, \qquad
\forall h>0
\]
The best choice of $h$ gives
\[
\sup_{t>t(\epsilon)} \vert F'(t)\vert\le 2
\sqrt{\sup_{t>t(\epsilon)}\vert F(t)\vert \,\sup_{t>t(\epsilon)} \vert F''(t)\vert }\le 2
\sqrt{B \sup_{t>t(\epsilon)} \vert F(t) \vert \sup_{t>t(\epsilon)}
\vert F'(t)\vert}
\]
thus  $\sup_{t>t(\epsilon)} \vert F'(t)\vert\le 4B
\sup_{t>t(\epsilon)} \vert F(t) \vert< \epsilon$, that is $F' \to 0$
and hence $F'' \to 0$ as $t \to +\infty$.
\end{proof}

The following result extends the inequality in Chaplygin's theorem (1919) \cite{chaplygin19,bertolino70,mitrinovic91} to the higher derivatives and provides also a quite simple proof of the stardard result (see also \cite{azbelev58}; a different, possibly more involved proof could pass through My{\u s}kis' induction  \cite{bertolino70}).

\begin{theorem} \label{aph}
Let $y: [0,c] \to \mathbb{R}$, be any $C^n$ function which  satisfies the differential inequality
\begin{equation} \label{vsv}
y^{(n)}+a_1(x) y^{(n-1)}+\cdots+ a_{n}(x) y\ge 0,
\end{equation}
and the initial conditions
\[
y^{(i)}=0, \textrm{ for } i=0,1,\cdots,n-1,
\]
where the coefficients $\{a_i(x)\}$ are Lipschitz.
Then there is some constant $b$,  $0<b\le c$, dependent on the coefficients $\{a_i(x)\}$ but independent of $y$, such that for every $y$  and every $i=0,\cdots, n-1$ we have  $y^{(i)}\ge 0$ on $[0,b]$. Moreover, the equality $y=0$ holds on the whole interval $[0,b]$ if and only if the equality in   (\ref{vsv}) holds  on the whole interval $[0,b]$.
\end{theorem}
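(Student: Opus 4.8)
The plan is to reduce the differential inequality to a nonnegative forcing term and to read off the signs of $y$ and its derivatives from an explicit solution formula. First I would set $f:=y^{(n)}+a_1 y^{(n-1)}+\cdots+a_n y$, so that $f\ge 0$ by (\ref{vsv}) and $f$ is continuous (since $y\in C^n$ and the $a_i$ are continuous). Writing $L$ for the differential operator on the left of (\ref{vsv}), the function $y$ is then the unique solution of the linear initial value problem $L[y]=f$ with vanishing Cauchy data $y(0)=\cdots=y^{(n-1)}(0)=0$; uniqueness holds simply because $L$ is linear, so no strong regularity of the $a_i$ is needed here. I would represent this solution by Duhamel's principle through the Cauchy (impulse--response) function $C(x,s)$, defined for each fixed $s$ as the solution of $L_x[C(\cdot,s)]=0$ subject to $\partial_x^j C(s,s)=0$ for $j=0,\dots,n-2$ and $\partial_x^{n-1}C(s,s)=1$.

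Writing $u_i(x,s):=\partial_x^i C(x,s)$, the representation and its derivatives give
\[
y^{(i)}(x)=\int_0^x u_i(x,s)\,f(s)\,\dd s,\qquad i=0,1,\dots,n-1,
\]
the boundary contributions produced by differentiating under the integral all vanishing because $u_j(x,x)=\partial_x^j C(x,x)=0$ for $j\le n-2$; the first surviving boundary term appears only at order $n$, reproducing $y^{(n)}=f+\int_0^x u_n\,f\,\dd s$. Hence, to obtain $y^{(i)}\ge 0$ it suffices to show that each kernel $u_i(x,s)$ is nonnegative on a triangle $\{0\le s\le x\le b\}$ with $b$ independent of $y$.

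The heart of the argument is this positivity of the kernels, which I would prove by downward induction on $i$. The functions $(u_0,\dots,u_{n-1})$ solve, in $x$, the first--order system $\partial_x u_j=u_{j+1}$ for $j\le n-2$ and $\partial_x u_{n-1}=-\sum_{k=1}^{n} a_k\, u_{n-k}$, with initial vector $(0,\dots,0,1)$ at $x=s$; since its coefficients are continuous and linear in the state, the flow, and thus $(x,s)\mapsto u_i(x,s)$, is jointly continuous on the compact set $\{0\le s\le x\le c\}$. Because $u_{n-1}(s,s)=1$, uniform continuity yields a $b\in(0,c]$, depending only on the $a_i$, with $u_{n-1}\ge \tfrac12$ whenever $0\le s\le x\le b$. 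For $i\le n-2$ the initial condition $u_i(s,s)=0$ gives the integral recursion
\[
u_i(x,s)=\int_s^x u_{i+1}(t,s)\,\dd t,\qquad 0\le i\le n-2,
\]
so once $u_{i+1}\ge 0$ on the triangle we get $u_i\ge 0$ there as well; descending from $i=n-1$ to $i=0$ establishes $u_i\ge 0$ for all $i$, whence $y^{(i)}(x)=\int_0^x u_i(x,s)f(s)\,\dd s\ge 0$ on $[0,b]$.

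Finally, the equality clause follows from the same representation: if equality holds in (\ref{vsv}) on $[0,b]$ then $f\equiv 0$ there, whence every $y^{(i)}\equiv 0$ and in particular $y\equiv 0$; conversely $y\equiv 0$ on $[0,b]$ forces $L[y]\equiv 0$, i.e.\ equality in (\ref{vsv}). The step I expect to be most delicate is the uniform control underlying the choice of $b$: the kernels $u_i$ with $i<n-1$ vanish on the diagonal $x=s$, so their positivity degenerates there and a naive compactness argument is not enough. The downward recursion circumvents this by extracting a single clean sign, that of $u_{n-1}$ near the diagonal, and propagating it by integration; the Lipschitz hypothesis on the $a_i$ is convenient for quantifying the modulus of continuity of the flow, but continuity of the coefficients already suffices for the existence, uniqueness and joint continuity invoked above.
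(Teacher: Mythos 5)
Your proposal is correct and follows essentially the same route as the paper: the Cauchy (impulse--response) function, the Duhamel representation $y^{(i)}(x)=\int_0^x \partial_x^i C(x,s)\,Ly(s)\,\dd s$, positivity of the top kernel near the diagonal by continuity of the flow, and downward integration from the diagonal to propagate nonnegativity to the lower kernels. The only differences are cosmetic refinements --- you extract $u_{n-1}\ge \tfrac12$ on the whole triangle by uniform continuity rather than near $(0,0)$, and you correctly note that continuity of the $a_i$ already suffices where the paper invokes Lipschitzness.
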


\begin{proof}
Let us denote with $L$ the linear differential operator on the left-hand side of (\ref{vsv}). Let $K(x,\xi)$ be the Cauchy function, that is the function defined on the triangle $0\le \xi \le x < c/2$  which solves $L K=0$ for every $\xi \in[0,c/2)$ with initial condition
\[
K(\xi,\xi)=K^{(1)}(\xi,\xi)=\cdots=K^{(n-2)}(\xi,\xi)=0, \quad  K^{(n-1)}(\xi,\xi)=1,
\]
where these derivatives are with respect to $x$. This function is well defined because it is equivalently determined by the linear system of first-order differential equations in the dependent variables $(K^\xi_0, K^\xi_1, K^\xi_2,\cdots, K^\xi_{n-1}) : [0,c/2]\to \mathbb{R}$
\begin{align*}
\frac{d}{ds}{K^\xi}_0&=K^\xi_1, \\
\cdots &\\
\frac{d}{ds}{K^\xi}_{n-2}&= K^\xi_{n-1},\\
\frac{d}{ds} {K^\xi}_{n-1}&=- a_1(s+\xi) K^\xi_{n-1}-a_2(s+\xi) K^\xi_{n-2}\cdots-a_n(s+\xi) K^\xi_0 .
\end{align*}
where $s+\xi=x$, under the initial condition
\[
K^\xi_0(0)=K^\xi_1(0)=\cdots=K^\xi_{n-2}(0)=0, \quad  K^\xi_{n-1}(0)=1.
\]
Indeed, there is one and only one solution by the Picard-Lindel\"of theorem \cite[Cor.\ 5.1]{hartman64}, and so we obtain our desired Cauchy function once we set $K(x,\xi)=K^\xi_0(x-\xi)$ (thus $K^{(i)}(x,\xi)=K^\xi_i(x-\xi)$). Moreover, the above system of first-order differential equations is Lipschitz also with respect to the external parameter $\xi$, thus its solution $(K^\xi_0, K^\xi_1, K^\xi_2,\cdots, K^\xi_{n-1}) $ has a Lipschitz dependence on $(s,\xi)$ (this is Peano's theorem; if the $a$s are $C^1$ then one can infer that the $K$s are $C^1$ too  \cite[Theor.\ 3.1]{hartman64}, for the Lipschitz case see \cite{lang95,cartan71}). The function $K^{(n-1)}(x,\xi)$ is continuous in both $(x,\xi)$ on the triangle $0\le \xi \le x < c/2$ and in particular at $(0,0)$. Since $K^{(n-1)}(0,0)=1>0$,  there is some neighborhood of $(0,0)$ (in the product topology) over which  $K^{(n-1)}$ is positive, and hence a triangle $0\le \xi \le x < b\le c/2$ over which  $K^{(n-1)}$ is positive. But on the diagonal $K^{(i)}$, $i=1,\cdots, n-2$, vanishes so upon integration on $x$ we obtain that $K^{(i)}$, is positive on $0\le \xi < x < b$ for every $i=1,\cdots, n-1$.

From the assumption we have that $Ly\ge 0$, where $Ly$ is continuous. The  uniqueness of the solution to the differential equation $L y= f$  implies the easily verifiable   formula
\[
y(x)=\int_0^x K(x,\xi) Ly(\xi) \, \dd \xi ,
\]
 which under differentiation gives  more generally
\[
y^{(i)}(x)=\int_0^x K^{(i)}(x,\xi) Ly(\xi)\, \dd \xi , \qquad i=0,1,\cdots, n-1,
\]
thus $y^{(i)}\ge 0$ on $[0,b]$, and the equality $y=0$ on $[0,b]$ is possible only if $Ly=0$ on $[0,b]$.
\end{proof}

\section{Geodesic incompleteness and timelike variations}

Let $\eta$ be a future inextendible causal curve. We recall  that  $\Omega_f(\eta)$ denotes the
 closed set of
accumulation points to the future of $\eta$, that is,
\[
\Omega_f(\eta)=\bigcap_t \overline{\eta([t,+\infty))},
\]
and
analogously in the past case. The set $\Omega_f(\eta)$ is
non-empty and compact if and only if $\eta$ is totally future
imprisoned \cite[Prop.
3.2]{minguzzi07f}. Furthermore, in this  case $\Omega_f(\eta)$ is connected, it is
the intersection of all the compact sets in which $\eta$ is future
imprisoned and
%\footnote{This fact is proved in the third paragraph
%of the proof of \cite[Prop. 3.2]{minguzzi07f} as, with that
%notation, there is some $\epsilon>0$ such that
%$\Gamma_\epsilon\subset U$.}
for every relatively compact open set
$U\supset \Omega_f(\eta)$, $\eta$ is future imprisoned in $\bar{U}$.
%These results are similar but more general than those obtained in
%for compactly generated Cauchy horizons, in fact many
%results on compactly generated Cauchy horizons can be traced back to
%properties of totally imprisoned causal curves.

In \cite[Prop. 3.2]{minguzzi07f} we have also established that  if a
future inextendible totally future imprisoned causal curve $\gamma$
does not intersect the chronology violating set, then there is a
non-empty achronal compact set - the {\em minimal invariant set}
$\Omega\subset \Omega_f(\gamma)$ - with many interesting properties:
it is generated by lightlike lines contained in the set, and given
one such line $\eta: \mathbb{R}\to M$,
$\bar{\eta}=\Omega_f(\eta)=\Omega_p(\eta)=\Omega$ (see also \cite{kay97}).

The proof of the next theorem solves the main technical problem of the paper. Unfortunately, it is rather long.

\begin{theorem} \label{oih}
Let $(M,g)$ be a spacetime and let $\gamma: [0,v_{\infty}) \to
M$, $v_{+\infty}<+\infty$, be a future inextendible future
incomplete lightlike geodesic, with affine parameter $v$, totally
future imprisoned in a compact set. Then there is a
future directed timelike variational field on $\gamma$ which goes to zero\footnote{This notion is well defined, for it  is independent of the metric with which we measure  this field, since $\gamma$ is future imprisoned in a compact set.} as $v
\to v_{\infty}$
and whose induced variation gives, for sufficiently small
variational parameter, a future inextendible totally future
imprisoned timelike curve $\sigma$ such that $\Omega_f(\sigma)=
\Omega_f(\gamma)$.
\end{theorem}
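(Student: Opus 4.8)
The plan is to realize $\sigma$ as an exponential normal variation of $\gamma$: set $\sigma_s(v)=\exp_{\gamma(v)}\!\big(s\,W(v)\big)$ for a future-directed timelike field $W$ along $\gamma$ that decays to zero, and show that for one fixed, sufficiently small $s$ the curve $v\mapsto\sigma_s(v)$ is timelike on all of $[0,v_{\infty})$. First I would fix an auxiliary complete Riemannian metric $h$ and work in a relatively compact neighbourhood of $\Omega_f(\gamma)$ in which $\gamma$ is eventually imprisoned, so that all tensors are uniformly bounded once measured against a unit-speed parametrization. Writing $\rho(v)=\vert\dot\gamma(v)\vert_h$, the structural fact that makes both hypotheses (``incomplete'' and ``imprisoned'') enter is that $\int_0^{v_{\infty}}\rho\,\dd v$ is the \emph{infinite} $h$-length of the winding curve while $v_{\infty}<+\infty$; hence $\rho$ is unbounded as $v\to v_{\infty}$. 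It is precisely this blow-up of the affine velocity that makes the construction feasible: a field $W$ whose $h$-norm tends to $0$ can still pair non-trivially with $\dot\gamma$, so the term controlling the causal character of $\sigma_s$ need not degenerate in the limit.

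Second I would compute the causal character of the varied curve. Since $s\mapsto\exp_{\gamma(v)}(sW(v))$ is a geodesic, the second covariant $s$-derivative of its velocity vanishes, giving the exact expansion
\[
g(\dot\sigma_s,\dot\sigma_s)=2s\, g\Big(\frac{DW}{\dd v},\dot\gamma\Big)+s^2\Big[g\big(R(W,\dot\gamma)W,\dot\gamma\big)+g\Big(\frac{DW}{\dd v},\frac{DW}{\dd v}\Big)\Big]+O(s^3).
\]
Because $\gamma$ is an affinely parametrized lightlike geodesic, $\dot\gamma$ is parallel and $g(\dot\gamma,\dot\gamma)=0$, so the first-order coefficient equals the total derivative $\frac{\dd}{\dd v}g(W,\dot\gamma)$. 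Writing $W=w(v)U(v)$ with $U$ a future-directed $g$-unit timelike field adapted to the null direction, the requirement that $\sigma_s$ be future timelike reduces, to leading order, to a first-order differential inequality for the profile of the schematic form $w'>\beta(v)\,\rho(v)\,w$, together with $w>0$ (which keeps $W$ future-directed timelike) and $w\to 0$.

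Third --- this is the analytic heart, and the step I expect to be the main obstacle --- I would produce a profile $w$ meeting all three requirements at once, despite the unbounded coefficient $\beta\rho$. Here I would select $U$ so as to control the sign and the integral of $\beta\rho$, recast the inequality in the positivity form $Lw\ge 0$ with vanishing initial data, and invoke Chaplygin's theorem (Theorem~\ref{aph}) to certify that the relevant combination of $w$ and its derivatives retains the correct sign on a genuine interval. Lemma~\ref{pdg} would then force $w$ and its derivatives to tend to $0$, so that $W\to 0$ while the $O(s^2)$ terms stay dominated by the first-order one \emph{uniformly} in $v$; this uniform domination is exactly what permits a single small $s$ to serve on the whole of $[0,v_{\infty})$. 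The delicacy is entirely in reconciling the decay of $w$ with the unbounded factor $\rho$ near $v_{\infty}$ without losing timelikeness at any $v$.

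Finally I would assemble the conclusions. Since $W\to0$, the $h$-displacement $d_h(\sigma_s(v),\gamma(v))\to0$; hence the tails of $\sigma_s$ and $\gamma$ have the same closures in the limit and $\Omega_f(\sigma_s)=\Omega_f(\gamma)$, and $\sigma_s$ is totally future imprisoned in any slightly enlarged compact neighbourhood of $\Omega_f(\gamma)$. Because $\gamma$ is future inextendible and imprisoned, $\Omega_f(\gamma)$ is a nonempty compact connected set that is not a single point (by the recalled properties it is generated by lightlike lines), so a curve sharing this limit set cannot converge to a future endpoint; therefore $\sigma_s$ is future inextendible, which completes the argument.
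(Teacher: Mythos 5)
Your overall strategy is the same as the paper's --- vary $\gamma$ by $\exp_{\gamma}(sW)$ along a future timelike field with decaying profile, check timelikeness at first order, then fight for uniformity in the curve parameter --- but at the two places where you yourself locate the difficulty your proposal has genuine gaps, and the named tools are invoked in ways that do not match their hypotheses. First: from infinite $h$-length against $v_\infty<+\infty$ you get only $\limsup\rho=+\infty$, not $\rho\to\infty$ (equivalently $h^{-1}\to 0$ in the paper's notation), and the full limit is indispensable: any profile $w>0$ with $w'>\beta\rho w$ decays no faster than the solution of the corresponding equality, which is comparable to $h^{-1}$; if $h^{-1}\not\to 0$ then no admissible $w$ tends to zero, and the conclusions $W\to 0$ and $\Omega_f(\sigma)=\Omega_f(\gamma)$ are lost. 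The paper obtains $h^{-1}\to 0$ (and $\frac{\dd h^{-1}}{\dd t}\to 0$, needed to bound $\dot x$) precisely by applying Lemma \ref{pdg} to $F(t)=v(t)-v_\infty$, using $\vert F''\vert\le K\vert F'\vert$ from the compactness bound on $\kappa$. In your outline Lemma \ref{pdg} appears instead as a device to force your unknown profile $w$ and its derivatives to zero --- but its hypothesis $\vert w''\vert\le B\vert w'\vert$ is unavailable for an abstractly produced $w$; that application is backwards.

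Second, and more seriously: writing $g(\dot\sigma_s,\dot\sigma_s)$ as a first-order term plus $O(s^3)$ hides the crux. Uniformity in $v$ of that remainder requires controlling the second $s$-derivative at \emph{finite} $s$, along varied curves whose tangent is no longer normalized and can grow; this is exactly what the paper's Gronwall estimate on $\vert g(V,\p_t)\vert$, the auxiliary function $y$, and the rescaling $U=xu$ accomplish, with Theorem \ref{aph} applied in the variational parameter $u$ \emph{after} the rescaling has rendered the coefficients $t$-independent, so that the smallness of the Chaplygin interval $[0,b]$ is harmless because it is uniform in $t$. You instead propose to apply Theorem \ref{aph} in the $v$ (or $t$) variable to the profile inequality: there its hypotheses fail (the coefficient $\beta\rho$ is unbounded on the non-compact interval $[0,v_\infty)$, whereas the theorem wants Lipschitz coefficients on a compact $[0,c]$), and even if it applied it would certify a sign only on a short interval near $v=0$, which says nothing about the limit $v\to v_\infty$ where the whole difficulty lives. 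Note also that in the unit-speed parametrization the first-order gain is $-x^2\to 0$, degenerating at the same rate as the second-order terms; your observation that the controlling term ``need not degenerate'' is an artifact of the affine parametrization and does not by itself deliver the uniform domination you need --- that delivery is the content of the paper's long comparison argument. Your closing step (inextendibility and imprisonment of $\sigma_s$ from $W\to 0$ together with the fact that $\Omega_f(\gamma)$ cannot be a singleton) is sound and matches the paper's brief final remark.
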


%The next proof is rather long and contains the more de it solves the main technical part contains the main technical parts of the proof.

\begin{proof}
Since $M$ is time oriented we can find on it a future directed $C^2$ timelike vector field $V$ such that $g(V,V)=-1$.
Let $A$ be a relatively compact set which contains the compact set\footnote{Trying to redefine $V$ so as to impose $D_V V=0$ on a neighborhood of $C$ leads to various problems of differentiability, related to the fact that $C$ is a subset of an achronal and hence just Lipschitz hypersurface, and to the fact that the induced topology of $C$ cannot be recovered from the real line topology of its generators.} $C=\gamma\cup \Omega_f(\gamma)$.
Let us define on $A$ the Riemannian metric
\begin{equation} \label{nka}
g'(X,Y)=\frac{1}{2}g(X,Y)+g(X,V) g(Y,V),
\end{equation}
and let $t(v)$ be the parameter that measures the $g'$-length along
$\gamma$ and such that $t(0)=0$. Let $\tilde\gamma(t):=\gamma(v(t))$ and let a dot denote  differentiation with respect to $t$. Let $\p_t$ be the tangent vector to $\tilde\gamma$ and let $\p_v$ be the tangent vector to $\gamma$.
 From the definition of $t$,  $g(V,\p_t)=-1$. As $\gamma$ has no future endpoint $t$ has no upper
bound (the argument goes as in the last part of the proof of
\cite[Lemma 3.65]{beem96}, note that the completeness of $g'$ is not
needed because $\gamma$ is  imprisoned in a compact set). Let $\kappa(t)$
and $h(t)>0$ be given by
\[
D_t \p_t=\kappa \,\p_t , \qquad
\p_v=h \,\p_t, \qquad \textrm{where }
\kappa=-h^{-1}\dot{h},
\]
and where $D$ is the Levi-Civita covariant derivative compatible with $g$.
The $g'$-unit  subbundle of $T\bar{A}$ is compact. As a consequence,  the continuous
quantities
\[
\vert g(D_N V,N)\vert; \ \vert g(D_N V,D_N
V)\vert; \ \vert g(N, R(V,N)V)\vert;  \ \vert g(N, D_V V)\vert;  \ \vert g(N, D_N D_V V)\vert,
\]
with $N$ arbitrary $g'$-normalized vector, are bounded by a positive constant $K$ on
the compact set $\bar{A}$. Note that
\[
 \kappa=-\kappa \,g\left(V,\p_t\right)=-g\left(V,D_t\p_t \right)=g\left(D_t V, \p_t\right)
\]
thus as $\p/\p t$ is $g'$-normalized, $\vert \kappa\vert \le K$.
Since $\gamma$ is future incomplete the affine parameter
\[
v(t)=\int_0^{t}h^{-1}(s)\,\dd s\] has a finite limit $v_{\infty}$ as $t
\to +\infty$. Let us consider the function $F(t)=v(t)-v_{\infty}$. It is
$C^2$, with $F'=h^{-1}$, $F''=\kappa h^{-1}$, thus  $\vert F''\vert \le
K\vert F'\vert $. By lemma \ref{pdg} $F' \to 0$,
$F'' \to 0$, that is $h^{-1} \to 0$ and $\frac{\dd h^{-1}}{\dd t}
\to 0$ as $t \to +\infty$. The positive function
\begin{equation}
\label{pvg} x(t)=\frac{h^{-1}(t)}{2v_{\infty}-v(t)}, \quad t \in
[0,+\infty),
\end{equation}
is clearly such that both $\vert  x\vert $ and $\vert \dot{x}\vert$
are bounded by some constant $E> K$. From the definition of $x(t)$ we find
\[
\dot{x}=\kappa x+x^2,
\]
thus
\[
\vert \dot{x}\vert\le 2E x .
\]
For every $p\in C$ we can find some constant $c>0$ such that $\exp_p(sV)$ exist for every $s\in [0,c]$ and is contained in $A$. By continuity of the exponential map on the base point (recall that the geodesic equation is a first order differential equation on the tangent bundle; the continuous dependence on the initial condition is proved in \cite{hartman64}) we have that there is a neighborhood $O$ of $p$ such that for every $q\in O$, $\exp_q(sV)$ exist for every $s\in [0,c]$ and belongs to $A$.
Since $C$ is compact we can find a  constant $B>0$ such that for every $0\le U <B$,  and every $p\in C$, $\exp_p (UV)$ exists and belongs to $A$.

%For some open set $\mathcal{U}\subset TM$, $\pi(\mathcal{U})=M$, the ($D$-)exponential map $\exp\colon \mathcal{U}\to M\times M$, $(p,v) \mapsto\exp_p v$, is well defined  and actually a diffeomorphism on its open image $\exp(\mathcal{U})$. Since $\exp^{-1}(\exp(\mathcal{U})\cap (A\times A))\subset TA$ is open and $C\subset A$, is compact we can find a constant $B>0$ such that for every $0\le U <B$,  and every $p\in C$, $\exp_p (UV)$ exists and belongs to $A$.

Now, consider the variation of $\gamma$ towards the future given by
\begin{equation} \label{vad}
\alpha(t,u)=\exp_{\tilde\gamma(t)}(x(t)V u)
\end{equation}
where $x$ is given by Eq. (\ref{pvg}) and $\frac{\p}{\p u}=xV$ is
the variational field. Observe that the variational field vanishes in the limit since $\vert x\vert \to 0$.
 Since $\vert x\vert$ is bounded, it
is  possible to find an $\epsilon>0$ such that the exponential map of Eq. (\ref{vad})
is well defined for $u<\epsilon$ and with value in $A$ for every $t$.

We want to establish whether the
varied curve is timelike for sufficiently small $u$.
Since $\p_t$ and $\p_u$ commute we have over $\gamma$ ($u=0$)
\begin{align*}
\frac{1}{2}\, \p_u\, g\left(\p_t,\p_t \right)\Big\vert_{u=0}&
=  \p_t \,g\left(\p_u,\p_t \right)-g\left(\p_u, D_t \p_t\right)\\
&=-  \left(\dot{x}+h^{-1}\dot{h}
x\right)=-x \frac{\dd \ln (hx)}{\dd t}=- x^2
\end{align*}
where we used Eq. (\ref{pvg}).  Thus at any fixed $t$, the varied curve is timelike provided we restrict $u$ to some interval  $0<u<\varepsilon(t)$. We want to show that we can find $\varepsilon$ independent of $t$, and hence that a timelike variation exists. This is done controlling the second derivative on a subset $\mathbb{R}^+\times [0,\epsilon)$ of the $(t,u)$ space.

%Note that the varied curve must be
%timelike for all the values of $t$. Even in the case in which a
%suitable bounded function $x$ exists which makes $g(\frac{\p}{\p
%t},\frac{\p}{\p t})$ negative for $0<u<\varepsilon(t)$ we need to
%prove that $\varepsilon$ is independent of $t$. To deal with this we
%shall consider the second derivative under the variation
Let us calculate the second derivative (possibly $u\ne 0$)
\[
\frac{1}{2} \p_u^2\, g\!\left(\p_t,\p_t\right)=
\p_u \,g\!\left(\p_t,D_t \p_u\right) = g\!\left(D_t
\p_u, D_t \p_u \right)+g\!\left(\p_t,D_t D_u
\p_u\right)+g\!\left(\p_t,R\!\left(\p_u,\p_t\right)\!\p_u\right).
\]
Since $V$ is $g$-normalized in $U$ and $x$ does not depend on $u$, the
right-hand side reads
\[
-\dot{x}^2+x^2\left[g\!\left(D_t V, D_t V\right)+ g\!\left(\p_t,R\!\left(V,\p_t\right)\!V\right)\right]+x^2 g(\p_t, D_t D_V V)+2x \dot{x}  g(\p_t, D_V V)
\]
for $0\le u<\epsilon$. Note that $\p_t$ is not necessarily
$g'$-normalized  for $u>0$, however $\p_t/\sqrt{g'(\p_t,\p_t)}$ is, thus we have
\[
\p^2_u\,
g\!\left(\p_t,\p_t\right)\le 8K x^2 \{g'\!
\left(\p_t,\p_t\right)+E  {g'\!
\left(\p_t,\p_t\right)}^{1/2} \}.
\]
Let us use the inequality, which holds for any positive $a,b,c$
\[
b+a\sqrt{b}\le c+(1+\frac{a^2}{4c}) b,
\]
so as to obtain with $b= g'\!
\left(\p_t,\p_t\right)$, $a=E $, $c=E^2/2$
\begin{equation} \label{nnj}
\p^2_u\,
g\!\left(\p_t,\p_t\right)\le 4K x^{2}\{E^2+ 3 g'\!
\left(\p_t,\p_t\right)\}.
\end{equation}

Let us leave this equation for the moment. Since $V$ is  $g$-normalized
\begin{align*}
\p_u g\!\left(V,\p_t \right)&=x g\!\left(D_V V,\p_t \right)+g\!\left(V,D_t \p_u\right)\\&=x g\!\left(D_V V,\p_t \right)+g\!\left(V,x D_t V+V \dot{x}
\right)\\
&=x g\!\left(D_V V,\p_t\right)-\dot{x}.
\end{align*}
 Moreover,  since for $u=0$,
$g\!\left(V,\p_t\right)=-1$, we have for any $t $ and $0\le u<B$
\begin{align*}
\vert g\!\left(V,\p_t\right) \vert&=\vert -1+\int_0^u [x g\!\left(D_V V,\p_t \right)-\dot{x}] \,\dd u'\vert\\
&\le x K \int_0^u   g'\!
\left(\p_t,\p_t \right)^{1/2}  \,\dd u'+1+EB .
\end{align*}
From Eq. (\ref{nka}) and the triangle inequality
\[
g'\!
\left(\p_t,\p_t \right)^{1/2}\le \vert g\!
\left(\p_t,\p_t \right)\vert ^{1/2}+\vert g\!\left(V,\p_t\right) \vert,
\]
which substituted into the previous equation gives
\begin{align*}
\vert g\!\left(V,\p_t\right) \vert&\le \{  x K \int_0^u  \vert g\!
\left(\p_t,\p_t \right)\vert^{1/2}  \,\dd u'+1+EB  \}+ x K \int_0^u  \vert g\!\left(V,\p_t\right) \vert  \,\dd u'\\
&\le  \{  x K \sqrt{u} [\int_0^u  \vert g\!
\left(\p_t,\p_t \right)\vert  \,\dd u']^{1/2}+1+EB  \}+ x K \int_0^u  \vert g\!\left(V,\p_t\right) \vert  \,\dd u' ,
\end{align*}
where we used the Cauchy-Schwarz inequality.
By the Gronwall inequality, defined $\Psi(u)=  x K \sqrt{u} [\int_0^u  \vert g\!
\left(\p_t,\p_t \right)\vert  \,\dd u']^{1/2}+1+EB$
\begin{align*}
\vert g\!\left(V,\p_t\right) \vert&\le\Psi+xK\int_0^u \Psi(s) e^{x K (u-s)} \dd s\le \Psi+xKe^{E K B}\int_0^u \Psi  \dd s\\
&\le (1+EB ) (1+e^{E K B}EKB)+ x K \sqrt{u}\, [\int_0^u  \vert g\!
\left(\p_t,\p_t \right)\vert  \,\dd u']^{1/2}\\
& \quad+ e^{E K B} x^2 K^2 \int_0^u   \sqrt{s} \,[\int_0^s  \vert g\!
\left(\p_t,\p_t \right)\vert  \,\dd u']^{1/2} \dd s \\
&\le (1+EB ) (1+e^{E K B}EKB)+ x K \sqrt{u} \,[\int_0^u  \vert g\!
\left(\p_t,\p_t \right)\vert  \,\dd u']^{1/2}\\
& \quad+ \frac{K^2}{\sqrt{2}} \,e^{E K B} x^2 u  [\int_0^u    \int_0^s  \vert g\!
\left(\p_t,\p_t \right)\vert  \,\dd u'\dd s ]^{1/2} ,
\end{align*}
where in the last step we used once again the  Cauchy-Schwarz inequality.

We have shown that there are positive constants $\chi,\psi,\omega$,
\begin{align*}
\chi &=(1+EB ) (1+e^{E K B}EKB); \quad \psi=K; \quad \omega=\frac{K^2}{\sqrt{2}} \,e^{E K B} ,
\end{align*}
such that defined
\[
y=x^2\int_0^u    \int_0^s  \vert g\!
\left(\p_t,\p_t \right)\vert  \,\dd u'\dd s ,
\]
Eq.\ (\ref{nka}) with the bound for $\vert g\!\left(V,\p_t\right) \vert$ just found implies
\[
g'\! \left(\p_t,\p_t \right)\le \frac{1}{x^2}\, \p^2_u y+(\chi+\psi  \sqrt{u}\, \sqrt{\p_u y}+\omega x u \sqrt{y})^2 \le \frac{1}{x^2}\, \p^2_u y+4(\chi^2+\psi^2  u \p_u y+\omega^2 x^2 u^2 y) .
\]
Substituting into the right-hand side of Eq.\ (\ref{nnj}), we obtain that in a maximal connected (non-empty) neighborhood of $u=0$, where $g\!\left(\p_t,\p_t\right)(u)$ is non-positive, the following inequality holds
\begin{align}
-\frac{1}{x^4}\,\p^4_u\,
y& \le  4K\{E^2+12 \chi^2+\frac{3}{x^2}\p^2_u y+12\psi^2  u \p_u y+12\omega^2 x^2 u^2 y\}  ,  \label{inf}
\end{align}
where at $u=0$, the function $y(u)$ satisfies
\[
y(0)=0, \quad \p_u y(0)=0, \quad \p^2_u y(0)=0, \quad \frac{1}{x^3}\, \p^3_u y(0)=2 x.
\]
Let us consider a function $w$ which satisfies
\begin{align}
-\frac{1}{x^4}\, \p^4_u\,
w& =  4K\{E^2+12 \chi^2+\frac{3}{x^2}\p^2_u w+12\psi^2  u \p_u w+12\omega^2 x^2 u^2 w\}, \label{ing}
\end{align}
with the initial conditions
\[
 w(0)=0, \quad \p_u  w(0)=0, \quad \p^2_u  w(0)=0,  \quad \frac{1}{x^3}\,   \p^3_u  w(0)=0,
\]
and let us consider a function $z$ which satisfies
\begin{align}
-\frac{1}{x^4}\, \p^4_u\,
z& =  4K\{\frac{3}{x^2}\p^2_u z+12\psi^2  u \p_u z+12\omega^2x^2 u^2 z\}, \label{inh}
\end{align}
with the initial conditions
\[
 z(0)=0, \quad \p_u  z(0)=0, \quad \p^2_u  z(0)=0, \quad \frac{1}{x^3}\,  \p^3_u z(0)=2 x.
\]
These functions exist and are unique because the right-hand sides of these differential equations are Lipschitz in the variables $(w,\p_u w,\p_u^2 w, \p^3_uw, u)$ or in the variables $(z,\p_u z,\p_u^2 z, \p^3_u z, u)$. Existence and uniqueness of $C^4$ solutions in a neighborhood of $u=0$ follows then from the   Picard-Lindel\"of theorem once they are rewritten as a system of first order differential equations through a standard trick. We observe that $w+z$ satisfies Eq.\ (\ref{inf}) with the equality sign and has the same initial conditions of $y$.

These differential equations  depend on $t$ just through $x(t)$, and this dependence can be removed from the differential equation introducing the variable $U=xu$. Let $x_m=x(t_m)$ be the maximum value of $x$ (recall that $x\to 0$ for $t\to \infty$, so the maximum exists).
Equation (\ref{ing}) has initial conditions which do not depend on $t$ once expressed in this variable, thus there is a positive constant $b$ and some function $W:[0,x_m b)\to \mathbb{R}$ such that
\[
w(t,u)=W(x(t)u).
\]
From Eq.\ (\ref{ing})  we get the first terms of the Taylor expansion of $W''$
\[
W''(U)=-2K(E^2+12\chi^2) U^2+o_1(U^3).
%-K(E^2+12a^2) U^2(1-\frac{K}{2}U^2)+o(U^4).
\]
Observe that $w$ is defined for $u\in [0, \frac{x_m}{x(t)}\, b)$ and so for $u\in [0,b)$.
The differential equation (\ref{inh}) is linear homogeneous in $z$, thus if $Z(x_m u)$ is the solution for $t=t_m$, we have for arbitrary $t$
\[
z(t,u)=\frac{x(t)}{x_m} \,Z(x(t) u),
\]
where at $U=0$, $Z'''=2\,x_m$, $Z''''=0$, so that
\[
Z''(U)=2\,x_m U + o_2(U^2)
%\sqrt{2}x_m U (1-KU^2)+ o(U^3)
\]
Thus $b$ can be chosen so small that $x_m b<B$  and for $0<U<x_m b$,
\[
2x_m- 2K (E^2+12\chi^2) b> Eb\frac{b\vert o_1(U^3)\vert+\frac{1}{x_m}U \vert o_2(U^2)\vert }{U^3}
\]
For any $t$, let $u\in [0,b)$ then $U\in [0,x_m b)$ and
\begin{align*}
 \p^2_u (w+z)&=x^2[-2K(E^2+12\chi^2)U^2 +o_1(U^3)+ \frac{x}{x_m}\, 2\,x_m U+\frac{x}{x_m}\, o_2(U^2)]\\
 &\ge \frac{x^2}{b} [2-2K(E^2+12\chi^2)b] U^2-x^3\vert \frac{1}{x} o_1(U^3)+\frac{1}{x_m}\, o_2(U^2)\vert\\
 &\ge \frac{x^2}{b} [2-2K(E^2+12\chi^2)b] U^2-x^3\{ b \vert o_1(U^3)\vert /U+\frac{1}{x_m}\vert o_2(U^2)\vert\}\\
 &\ge \frac{x^2 U^2}{b} \{2-2K(E^2+12\chi^2)b- Eb \frac{ b \vert o_1(U^3)\vert +\frac{1}{x_m}\, U \vert o_2(U^2)\vert}{U^3} \}\ge  0 ,
\end{align*}
where the last inequality is strict for $U>0$.

The function $\tilde{y}=y-(w+z)$ regarded as a function of $U$, has the initial conditions $\p^i_U \tilde{y}=0$, $i=0,1,2,3$, and satisfies the differential inequality
\begin{align}
\frac{1}{12K}\,\p^4_U\,
\tilde y + \p^2_U \tilde y+4\psi^2  U \p_U \tilde y+4\omega^2 U^2 \tilde y  \ge 0,
\end{align}
thus by Theor.\ \ref{aph} there is some interval $[0,x_m \tilde{b})$, $\tilde{b}>0$, for the variable $U$ over which $\p^2_U \tilde{y} \ge 0$ hence $ \p^2_u y\ge \p^2_u (w+z)$ for $u\in [0,\tilde{b}]$. Thus $b>0$ can be chosen so small that $x^2 \vert g(\p_t,\p_t)\vert=\p^2_u y > 0$ on $(0,b]$ independently of the value of $t$.

We already known that for each $t$
there is an interval  $(0,\epsilon(t))$ such that for $u$ belonging to this interval $g(\p_t,\p_t)<0$. The just found result proves that at any $t$ we  can take $\epsilon(t)=b$, for otherwise $\p^2_u y$ would have to vanish for some $u\in (0,b)$ which is impossible.

The last statement on the equivalence $\Omega_f(\sigma)=
\Omega_f(\gamma)$ follows from $x\to 0$ and from the fact that the norm of  $V$, with respect to any auxiliary Riemannian metric, is bounded on $\bar{A}$.
\end{proof}

\subsection{Completeness of compactly generated Cauchy horizons}

In this subsection we apply the previous theorem to solve our  main problem and explore further consequences.
Indeed, we are ready to prove the main result of this work.

%A partial Cauchy hypersurface is an acausal edgeless (and hence closed) set.
%If $S$ is a partial Cauchy surface, $H^{-}(S)$ is {\em
%compactly generated} if every generator $\gamma$ is future
%imprisoned in a compact set $C_\gamma$.
%
%\begin{theorem}
%Let  $S$ be a partial Cauchy surface. If $H^{-}(S)$ is compactly
%generated then every lightlike generator $\gamma$ of $H^{-}(S)$ is
%future complete.
%\end{theorem}

\begin{proof}[Proof of Theorem \ref{one}]
Let $p
\in H^{-}(S)$ and let us denote with $\gamma$ the portion of the generator passing through $p$ to the future of $p$. Since $S$ is edgeless, $\gamma$ is future inextendible. Moreover,
since $S$ is a partial Cauchy surface $H^-(S)\cap S=\emptyset$. As
$H^{-}(S)$ is closed,  $\Omega_f(\gamma) \subset H^{-}(S)$. Since
$\Omega_f(\gamma)$ is compact, $S$ is closed and
$\Omega_f(\gamma)\cap S=\emptyset$ we can find an open neighborhood
$U\supset \Omega_f(\gamma)$ such that $\bar{U}\cap S=\emptyset$,
$\bar{U}$ is compact, and $\gamma$ is contained in
$\bar{U}$.
Suppose that $\gamma$ is not future complete, and consider the future inextendible timelike curve $\sigma$
obtained through the variation of $\gamma$ to the future as in
Theorem \ref{oih} (we are not demanding that $\gamma\subset \Omega_f(\gamma)$). If the variational parameter is taken sufficiently
small the starting point of $\sigma$ belongs to $Int D^{-}(S)$. As
$\Omega_f(\sigma)=\Omega_f(\gamma)\subset U$ (or using the compactness of $\Omega_f(\gamma)\cup \gamma$), the variational parameter can be chosen so small that $\sigma$ is totally imprisoned
in $\bar{U}$. However, it must intersect the Cauchy surface $S$, so we obtain the desired
contradiction.
\end{proof}

%\begin{remark}
%The acausality of $S$ in the previous theorem is essential,
%otherwise one gets a counterexample (to the past version) with $S$
%the Misner boundary in the Misner spacetime \cite{stahl97}.
%\end{remark}
%

If we know that a $C^0$ future null hypersurface \cite{galloway00}, not necessarily a Cauchy horizon, is generated by almost closed geodesic then we can obtain stronger results on the relationship between geodesic incompleteness of the generators and chronology violation.

\begin{theorem} \label{kso}
Let $\gamma$ be a future inextendible future incomplete lightlike
geodesic,  future imprisoned in a compact set and such that $\gamma
\subset \Omega_f(\gamma)$. Then for every $p \in \gamma$, and neighborhood $U\supset \Omega_f(\gamma)$, there is some closed timelike curve contained in $I^+_U(\gamma)$.
% there is a neighborhood $W\supset
%\Omega_f(\gamma)$ such that $I^{+}_W(\Omega_f(\gamma))$ is contained
%in the chronology violating region $\mathcal{C}$.
\end{theorem}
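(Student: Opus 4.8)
The plan is to feed $\gamma$ into Theorem \ref{oih} and then to exploit the recurrence encoded in $\gamma\subset\Omega_f(\gamma)$ in order to close up the resulting timelike curve into a loop, keeping every construction inside $U$ by working with the chronology relations of the open submanifold $(U,g\vert_U)$ rather than with the global ones.

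First I would apply Theorem \ref{oih} to the future incomplete, imprisoned lightlike geodesic $\gamma$. This yields a future inextendible, totally future imprisoned timelike curve $\sigma(t)=\exp_{\tilde\gamma(t)}(u_0\,x(t)V)$, for a fixed sufficiently small variational parameter $u_0$, whose variational field $u_0 x V$ tends to zero as $t\to+\infty$ and which satisfies $\Omega_f(\sigma)=\Omega_f(\gamma)$. Two consequences matter. On the one hand, since the tails satisfy $\overline{\gamma([t,+\infty))}\subset U$ for $t$ large (a compactness argument from $\Omega_f(\gamma)\subset U$ and imprisonment) and since the variational field vanishes in the limit, there is a threshold $T_1$ so that for $t\ge T_1$ both $\tilde\gamma(t)$ and $\sigma(t)$ lie in $U$ and the short timelike segment $s\mapsto\exp_{\tilde\gamma(t)}(s\,u_0 x(t)V)$, $s\in[0,1]$, joining $\tilde\gamma(t)$ to $\sigma(t)$ stays in $U$; hence $\sigma(t)\in I^+_U(\tilde\gamma(t))\subset I^+_U(\gamma)$ for every $t\ge T_1$. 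On the other hand, because $\gamma\subset\Omega_f(\gamma)=\Omega_f(\sigma)$, every point of $\gamma$ is a future accumulation point of $\sigma$.

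The closing-up step is the heart of the argument. Given $p\in\gamma$, I would pick $t_1\ge T_1$ with $\tilde\gamma(t_1)$ as close to $p$ as desired (possible since $p\in\Omega_f(\gamma)$) and set $q=\sigma(t_1)$, so that $q\in I^+_U(\tilde\gamma(t_1))$ via the short segment above, i.e. $\tilde\gamma(t_1)\in I^-_U(q)$. Now $I^-_U(q)$ is an open subset of $U$ containing $\tilde\gamma(t_1)$. Since $\tilde\gamma(t_1)\in\gamma\subset\Omega_f(\sigma)$, the curve $\sigma$ returns into this open neighborhood, so there exists $t_2>t_1$ with $\sigma(t_2)\in I^-_U(q)$, that is $\sigma(t_2)\ll q$ through a future timelike curve $c$ contained in $U$. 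Concatenating the future timelike arc $\sigma\vert_{[t_1,t_2]}$, running from $q=\sigma(t_1)$ to $\sigma(t_2)$, with $c$, running from $\sigma(t_2)$ back to $q$, produces a closed timelike curve through $q$. Finally I would verify containment: every $\sigma(t)$ with $t\in[t_1,t_2]$ lies in $I^+_U(\gamma)$ by the first step, while every point $r$ of $c$ satisfies $\sigma(t_2)\ll_U r$ and hence $r\in I^+_U(\sigma(t_2))\subset I^+_U(\gamma)$; thus the whole loop lies in $I^+_U(\gamma)$.

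The routine verifications — existence and smallness control of $\sigma$, and the fact that the tails of $\gamma$ eventually enter $U$ — are supplied by Theorem \ref{oih} and the imprisonment discussion, so the only genuine care needed is bookkeeping with the $U$-relative chronology: one must use $I^+_U$ and $I^-_U$ throughout (each open in $U$) and check that the connecting curve $c$ and the short exponential segments can be taken inside $U$. I expect this localization — guaranteeing the loop never leaves $U$, rather than merely that its endpoints do — to be the main obstacle; it hinges on the variational field tending to zero, which keeps $\sigma$ uniformly close to $\gamma$, and hence to $\Omega_f(\gamma)\subset U$, for large $t$. Since $p$ and $U$ were arbitrary and $t_1$ can be chosen with $\tilde\gamma(t_1)$ near $p$, the constructed closed timelike curves in fact accumulate at every point of $\gamma$.
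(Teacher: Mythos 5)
Your proposal is correct and takes essentially the same route as the paper: apply Theorem \ref{oih} to get the imprisoned timelike curve $\sigma$ with $\Omega_f(\sigma)=\Omega_f(\gamma)$, use the recurrence $\gamma\subset\Omega_f(\gamma)=\Omega_f(\sigma)$ to make $\sigma$ re-enter $I^-_U(q)$ for a point $q$ of $\sigma$ lying in $I^+_U(\gamma)$, and close the loop inside $U$. The only immaterial difference is bookkeeping: the paper cuts $\gamma$ so that $p$ is its starting point, takes $q$ to be the starting point of $\sigma$, and chooses the variational parameter small enough (using $\gamma\subset\Omega_f(\gamma)\subset U$ and the boundedness of $x$) that all of $\sigma$ lies in $U$, whereas you take $q=\sigma(t_1)$ on a late tail and control containment in $U$ via $x\to 0$.
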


\begin{proof}
Let us cut a first segment of $\gamma$ so as to make $p$ its starting point.
Consider the future inextendible timelike curve $\sigma(t)=\exp_{\tilde\gamma(t)}(xVb) \subset I^+(\gamma)$
constructed in theorem \ref{oih} through a timelike variation to the
future of $\gamma$. Let us take the variational parameter $b$ so small that the starting point $q$ of $\sigma$ stays in $U$. Since $\Omega_f(\gamma)\supset \gamma$ is compact and $x$ is bounded we can also find $b$ so small that $\sigma \subset U$. Since
$\Omega_f(\sigma)=\Omega_{f}(\gamma)\supset \gamma \ni p$, $\sigma$ enters indefinitely
any neighborhood  of $p$, in particular it enters $I_U^{-}(q)$. Thus $\sigma\cap I^-_U(q)\ne \emptyset$ which allows us to construct a  closed timelike curve contained in $I^+_U(\gamma)$.
\end{proof}

The next result establishes that compact Cauchy horizons are rather special null hypersurfaces.

\begin{theorem}
Let  $S$ be a partial Cauchy hypersurface. If $H^{-}(S)$ is compactly
generated then  it contains a lightlike line $\eta$ such that $\overline{\eta} = \Omega_f(\eta)=\Omega_p(\eta)$ and either this line is complete or $\eta$ belongs to the boundary of the chronology violating set.
\end{theorem}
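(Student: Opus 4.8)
The plan is to produce the line $\eta$ from a single future-imprisoned generator by invoking the minimal invariant set, and then to run the incompleteness-to-chronology-violation mechanism of Theorem \ref{kso} in both time directions. First I would pick any generator $\gamma$ of $H^{-}(S)$. Since $S$ is edgeless $\gamma$ is future inextendible, and by compact generation it is totally future imprisoned in some compact set. Because $\gamma\subset H^{-}(S)$ and $H^{-}(S)$ is achronal, no point of $\gamma$ can satisfy $p\ll p$, so $\gamma\cap\mathcal{C}=\emptyset$. Hence the hypotheses of the minimal invariant set result \cite[Prop.\ 3.2]{minguzzi07f} recalled above are met, and we obtain a non-empty achronal compact set $\Omega\subset\Omega_f(\gamma)$ generated by lightlike lines, together with one such line $\eta$ satisfying $\overline{\eta}=\Omega_f(\eta)=\Omega_p(\eta)=\Omega$. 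As $H^{-}(S)$ is closed and contains $\gamma$, we have $\Omega_f(\gamma)\subset H^{-}(S)$, so $\eta\subset\Omega\subset H^{-}(S)$; this is the line required by the statement, and its achronality gives $\Omega\cap\mathcal{C}=\emptyset$.

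It remains to establish the dichotomy. Suppose $\eta$ is not complete; by the time-reversal symmetry of the argument we may assume it is future incomplete. Since $\overline{\eta}=\Omega_f(\eta)=\Omega$ is compact, $\eta$ is future imprisoned and satisfies $\eta\subset\Omega_f(\eta)$, so Theorem \ref{kso} applies to $\eta$. Fix an arbitrary $p\in\eta$. I would reread the construction behind Theorem \ref{kso}: cutting $\eta$ so that $p$ is its initial point, the varied timelike curve $\sigma(t)=\exp_{\tilde\gamma(t)}(x(t)Vb)$ starts at $q=\exp_p(x(0)Vb)$, and the closed timelike curve produced there passes through this starting point $q$. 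Letting the variational parameter $b\to 0$ forces $q\to p$ while each such $q$ lies on a closed timelike curve, hence $q\in\mathcal{C}$; therefore $p\in\overline{\mathcal{C}}$. Combined with $p\in\Omega$ and $\Omega\cap\mathcal{C}=\emptyset$ this yields $p\in\partial\mathcal{C}$. Since $p\in\eta$ was arbitrary, $\eta\subset\partial\mathcal{C}$.

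Applying the same reasoning to the time-orientation-reversed spacetime (using $\Omega_p(\eta)=\Omega$ and the dual form of Theorem \ref{kso}) shows that if $\eta$ is past incomplete then again $\eta\subset\partial\mathcal{C}$. Consequently either $\eta$ is both future and past complete, that is complete, or it lies on the boundary of the chronology violating set, which is the claim. The main obstacle I anticipate is securing $\eta\subset\partial\mathcal{C}$ for the \emph{whole} line rather than merely $\Omega\cap\partial\mathcal{C}\neq\emptyset$: a crude compactness limit of the closed timelike curves furnished by Theorem \ref{kso} only deposits \emph{one} chronology-violating accumulation point in $\Omega$, and to reach every point of $\eta$ one must exploit that the loop in Theorem \ref{kso} can be made to pass arbitrarily close to any prescribed $p\in\eta$. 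A secondary point to check carefully is that $\Omega\cap\mathcal{C}=\emptyset$ genuinely follows from achronality, since a curve realizing $p\ll p$ would meet the achronal set $\Omega$ more than once.
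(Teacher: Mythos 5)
Your proof is correct, and its skeleton matches the paper's: extract $\eta$ from the minimal invariant set of \cite{minguzzi07f} after checking that the horizon avoids $\mathcal{C}$, then obtain $\eta\subset\overline{\mathcal{C}}$ in the incomplete case from the closed-timelike-curve construction inside Theorem~\ref{kso}. You diverge from the paper in one branch of the dichotomy, however. The paper disposes of future incompleteness outright: since $\eta$ is an inextendible null geodesic contained in $H^{-}(S)$ it is a generator, so Theorem~\ref{one} (the main result, already proved) makes it future complete, and only past incompleteness remains, handled by the time-dual of the Theorem~\ref{kso} mechanism exactly as you do. You instead treat both time directions symmetrically through Theorem~\ref{kso} and its dual, never invoking Theorem~\ref{one}; this is logically sufficient for the stated disjunction and spares you the (implicit in the paper) verification that $\eta$ is a generator of $H^{-}(S)$, but it establishes slightly less than the paper's argument, which shows that $\eta$ is in fact always future complete, so that only past incompleteness can force $\eta\subset\partial\mathcal{C}$. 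Two smaller points are in your favour: your derivation of $H^{-}(S)\cap\mathcal{C}=\emptyset$ from achronality (a point with $p\ll p$ would violate $I^{+}(H^{-}(S))\cap H^{-}(S)=\emptyset$) is simpler than the paper's route through the acausality of $S$ and the definition of the past Cauchy development, and equally valid; and your explicit observation that the bare statement of Theorem~\ref{kso} only yields a closed timelike curve somewhere in $I^{+}_U(\eta)$, so that one must reopen its proof to see that the loop passes through the starting point $q=\exp_p(x(0)Vb)\to p$ as $b\to 0$, correctly identifies and fills the step the paper compresses into ``(see the proof of Theorem~\ref{kso})''.
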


Here $\overline{\eta}$ is also a minimal invariant set, see \cite{minguzzi07f} for details.
%This result can actually be generalized with $H$ any $C^0$ future null hypersurface \cite{galloway00}.

\begin{proof}
Let $\gamma$ be a future inextendible generator of $H^-(S)$.  Clearly, $H^{-}(S)$ does not intersect the chronology violating set $\mathcal{C}$, indeed $S$ does not intersect it because it is acausal, and so no point $r$ of $H^{-}(S)$ can belong to $\mathcal{C}$ otherwise the closed timelike curve passing through $r$ would provide a future inextendible timelike curve not intersecting $S$.

The existence of $\eta$ is now a consequence of \cite[Theor.\ 3.9]{minguzzi07f} (see also \cite[Prop.\ 1]{kay97}). The curve $\eta$ is  future complete because every generator of a compactly generated past Cauchy horizon has this property. Moreover, it is past imprisoned in the compact set $\bar{\eta}$ thus it is  either past complete or it belongs to $\overline{\mathcal{C}}$ (see the proof of Theorem \ref{kso}).
\end{proof}

The existence of a complete lightlike line $\eta$ implies that one among  the null convergence condition and the null genericity condition do not hold on $\eta$. Typically the former property is assumed thus this fact establishes that the horizon at $\eta$ has rather special geometry.

\section{Conclusions}

We have established that for compact or compactly generated past Cauchy horizons which are not necessarily differentiable the classical theorem according to which the generators are future geodesically complete  still holds. This result and its dual are expected to be useful in the study of Cauchy horizons but also, given the broad applicability of Theorem \ref{oih}, in the study of  general  $C^0$  null hypersurfaces \cite{galloway00} which are compactly generated (i.e.\ whose null generators are  imprisoned in a compact set). This theorem is essential in order to infer that the expansion $\theta$ is non-negative almost everywhere over $H^{-}(S)$, a fact  used in many results and arguments of mathematical relativity.

\section*{Acknowledgments}  I thank Alexander Domoshnitsky for some useful comments on the history and development of Chaplygin's type theorems. This work has been
partially supported by GNFM of INDAM.

%\bibliography{../../bibliografie/simultaneity,../../bibliografie/libri,../../bibliografie/miei,../../bibliografie/mieiPreprints,../../bibliografie/mieiProceedings}
%\bibliographystyle{plain}

\def\cprime{$'$}

\end{document}